\title{Calibration for Weak Variance-Alpha-Gamma Processes\footnote{This
research was partially supported by ARC grant DP160104037.}}
\author{Boris Buchmann\thanks{Research School of Finance, Actuarial Studies \& Statistics,
Australian National University,
ACT 0200,
Australia.
Email: boris.buchmann@anu.edu.au
}
\and
Kevin W. Lu\thanks{Mathematical Sciences Institute,
	Australian National University,
	ACT 0200,
	Australia.
	Email: kevin.lu@anu.edu.au}
\and
Dilip B. Madan\thanks{Robert H. Smith School of Business, University of Maryland, College Park, MD. 20742, USA, Email: dbm@rhsmith.umd.edu}
}
\newcommand{\rmd}{{\rm d}}
\newcommand{\rmi}{{\rm i}}
\newcommand{\eqd}{\stackrel{D}{=}}
\newcommand{\EE}{\mathbb{E}}
\newcommand{\DD}{\mathbb{D}}
\newcommand{\RR}{\mathbb{R}}
\newcommand{\CC}{\mathbb{C}}
\newcommand{\PP}{\mathbb{P}}
\newcommand{\TTT}{{\cal T}}
\newcommand{\XXX}{{\cal X}}
\newcommand{\skal}[2]{\left\langle #1,#2\right\rangle}
\newcommand{\eins}{{\bf 1}}
\newcommand{\bfnull}{{\bf 0}}
\newcommand{\bfe}{{\bf e}}
\newcommand{\bfm}{{\bf m}}
\newcommand{\bft}{{\bf t}}
\newcommand{\bfx}{{\bf x}}
\newcommand{\bfy}{{\bf y}}
\newcommand{\bfalpha}{\boldsymbol{\alpha}}
\newcommand{\bfmu}{\boldsymbol{\mu}}
\newcommand{\bftheta}{\boldsymbol{\theta}}
\newcommand{\myCov}{{\rm Cov}}
\newcommand{\tr}{\diamond}
\numberwithin{equation}{section}
\newtheorem{lemma}{Lemma}
\newtheorem{remark0}{\bf Remark}
\newenvironment{remark}{\begin{remark0}\em}{\end{remark0}\par}
\newtheorem{proposition}{Proposition}
\numberwithin{theorem}{section}
\numberwithin{proposition}{section}
\numberwithin{lemma}{section}
\numberwithin{corollary}{section}
\numberwithin{remark0}{section}
\numberwithin{definition}{section}
\begin{document}

\maketitle

\begin{abstract}
The weak variance-alpha-gamma process is a multivariate L\'evy process constructed by weakly subordinating Brownian motion, possibly with correlated components with an alpha-gamma subordinator. It generalises the variance-alpha-gamma process of Semeraro constructed by traditional subordination. We compare three calibration methods for the weak variance-alpha-gamma process, method of moments, maximum likelihood estimation (MLE) and digital moment estimation (DME). We derive a condition for Fourier invertibility needed to apply MLE and show in our simulations that MLE produces a better fit when this condition holds, while DME produces a better fit when it is violated. We also find that the weak variance-alpha-gamma process exhibits a wider range of dependence and produces a significantly better fit than the variance-alpha-gamma process on an S\&P500-FTSE100 data set, and that DME produces the best fit in this situation.
\end{abstract}

\noindent {\em 2000 MSC Subject Classifications:} \ Primary: 60G51\\
\ Secondary: 62F10, 60E10\\
\noindent {\em Keywords:} Brownian Motion, Gamma Process, L\'evy Process, Subordination, Variance-Gamma, Variance-Alpha-Gamma, Self-Decomposability, Log-Return, Method of Moments, Maximum Likelihood Estimation, Digital Moment Estimation.


\section{Introduction}\label{secintro}

The subordination of Brownian motion has important applications in mathematical finance, and acts as a time change that models the flow of information, measuring time in volume of trade, as opposed to real time. This idea was initiated by Madan and Seneta in~\cite{MaSe90} who introduced the variance-gamma ($VG$) process for modelling stock prices, where the subordinate is Brownian motion and the subordinator is a gamma process.

Subordination can be applied to model dependence in multivariate price processes. The multivariate $VG$ process in~\cite{MaSe90} uses $n$-dimensional Brownian motion as its subordinate and a univariate gamma process as its subordinator, which gives it a restrictive dependence structure, where components cannot have idiosyncratic time changes and must have equal kurtosis when there is no skewness. Models based on linear combinations of independent L\'evy processes \cite{Ka09,Ma11} also do not account for both common and idiosyncratic time changes. These deficiencies are addressed by the use of an alpha-gamma subordinator, resulting in the variance-alpha-gamma ($VAG$) process which was introduced by Semeraro in~\cite{Se08} and also studied in~\cite{Gu13,LS10}. However, in this case, the Brownian motion subordinate must have {\em independent} components, which also restricts the dependence structure.

To be precise, let $B=(B_1,\dots,B_n)$, where $T=(T_1,\dots,T_n)$ be independent $n$-dimensi\-onal processes, where $B$ is Brownian motion and $T$ is a subordinator. Subordination is the operation that produces the process $B\circ T$ defined by $(B\circ T)(t):=(B_1(T_1(t)),\dots,B_n(T_n(t))),t\ge 0$. Subordination in the case when $T$ has indistinguishable components has been studied in \cite{BS10,s}, and when $B$ has independent components in \cite{BPS01}. In these cases, which we refer to as traditional subordination, $B\circ T$ is a L\'evy process, otherwise it may not be~(see~\cite{BLMa}, their Proposition~3.9). We refer the reader to~\cite{BKMS16} for a thorough discussion of traditional subordination and its applications.

In \cite{BLMa}, we introduced the weak subordination of $B$ and $T$, an operation that extends traditional subordination and always produces a L\'evy process $B\odot T$. Then the weak variance-alpha-gamma ($WVAG$) process can be constructed using weak subordination instead of traditional subordination, while allowing for the Brownian motion to have possibly correlated components. The $WVAG $ process exhibits a wider range of dependence while remaining parsimoniously parametrised, each component has both common and idiosyncratic time changes, it has $VG$ marginals with independent levels of kurtosis, and the jump measure has full support.

Weak subordination also has applied in quantitative finance. In \cite{MiSz17}, various marginal consistent dependence models have been constructed by weak subordination. In \cite{Ma18}, log return modelling based on the $WVAG$ process was applied in instantaneous portfolio theory. In \cite{Mi18}, weak subordination using subordinators with arbitrary marginal components and dependence specified by a L\'{e}vy copula was studied in the context of financial information flows.

Maximum likelihood estimation (MLE) has been used to fit financial data to a univariate $VG$ process in \cite{MCC98,FiS}, to a bivariate $VG$ process in \cite{FS07}, to a $WVAG$ process in \cite{MiSz17}, and to a factor-based subordinated Brownian motion in \cite{LMS16,MiSz17,Wa09}, a generalisation of the $WVAG$ process. Since the density function of the $VAG$ and $WVAG$ distribution is not explicitly known but its characteristic function is, the density function is computed using Fourier inversion.

In this paper, we derive a sufficient condition in terms of the parameters for Fourier invertibility, a problem that to our knowledge is not addressed in the existing literature. Then we compare MLE with method of moments (MOM) and digital moment estimation (DME) from \cite{Ma15}. Using simulations we find that MLE produces a better fit when the Fourier invertibility condition is satisfied but that DME is better when it is violated. In addition, we fit both the $WVAG $ and $VAG $ model to an S\&P500-FTSE100 data set and show that the weak model has a significantly better fit, and that DME is the better method in this situation. Finally, using a condition for the self-decomposability of the $WVAG$ process from \cite{BLMb}, we find that the log returns are self-decomposable.

This paper is structured as follows. In Section 2, we review the definition and properties of the $WVAG $ process, and other preliminaries. In Section 3, we derive a condition for Fourier invertibility. In Section 4, we apply MOM, MLE, DME to simulated and real data, and discuss our findings. In Section 5, we conclude the paper.


\section{Weak Variance-Alpha-Gamma Process}
Let $\RR^n$ be $n$-dimensional Euclidean space whose elements are row vectors $\bfx=(x_1,\dots,x_n)$ with canonical basis $\{\bfe_k\!:\!1\!\le\!k\!\le\!n\}$.
Let $\skal \bfx \bfy=\bfx \bfy'$ denote the Euclidean product, $\|\bfx\|^2=\bfx\bfx'$ denote the Euclidean norm, and let $\|\bfx\|^2_\Sigma:=\bfx\Sigma\bfx'$.
For $n$-dimensional processes $X$ and $Y$, $X\eqd Y$ indicates that $X$ and $Y$ are identical in law, that is their systems of finite dimensional distributions are equal.

A overview of L\'evy processes and weak subordination is given in the appendix. Throughout, $B=(B_1,\dots,B_n)\sim BM^n(\bfmu,\Sigma)$ refers to an $n$-dimensional Brownian motion with linear drift $\EE[B(t)]=\bfmu t$ and covariance matrix Cov$(B(t))=t\Sigma$, $t\geq0$.
An $n$-dimensional {\em subordinator} $T=(T_1,\dots,T_n)\sim S^n(\TTT)$ is an $n$-dimensional L\'evy process with nondecreasing components, and its L\'evy measure is denoted by $\TTT$.\\[1mm]
\noindent{\bf Gamma subordinator.}~For $a,b>0$, a univariate subordinator $G\sim\Gamma_S(a,b)$ is a {\em gamma subordinator} if its marginal $G(t)$, $t\!\geq\!0$,
is gamma distributed with shape parameter $at$ and rate parameter $b$. 
If $a\!=\!b$, we refer to $G$ as a standard gamma subordinator, in short, $G\sim\Gamma_S(b):=\Gamma_S(b,b)$.\\[1mm]
\noindent{\bf Alpha-gamma subordinator.}~Assume $n\!\ge\!2$. Let $\bfalpha\!=\!(\alpha_1,\dots,\alpha_n)\!\in\!(0,\infty)^n$ and $G_0,\dots,G_{n}$ be independent gamma subordinators such that $G_{0}\!\sim\!\Gamma_S(a,1)$, $G_k\!\sim\!\Gamma_S(\beta_k,1/\alpha_k)$, where $a\!>\!0$, $a\alpha_k\!<\!1$, $\beta_k\!:=\!(1-a\alpha_k)/\alpha_k$, $1\!\le\! k\!\le\! n$. A process $T\sim AG_S^n(a,\bfalpha)$ is an {\em alpha-gamma ($AG_S^n$) subordinator}~\cite{Se08} with parameters $a,\bfalpha$ if $T\eqd G_0 \bfalpha+(G_1,\dots,G_n)$. An alpha-gamma subordinator $T$ has correlated components with marginals $T_k\sim\Gamma_S(1/\alpha_k)$, $1\le k\le n$.\\[1mm]
\noindent{\bf Variance-gamma process.}~Let $b\!>\!0$, $\bfmu\!\in\!\RR^n$ and $\Sigma\!\in\!\RR^{n\times n}$ be a covariance matrix. A process $V\!\sim\! VG^n(b,\bfmu,\Sigma)$ is a {\em variance-gamma $(VG^n)$ process}~\cite{MaSe90} with parameters $b,\bfmu,\Sigma$ if $V\!\sim\! BM^n(\bfmu,\Sigma)\circ (\Gamma_S(b)\bfe)$, where $\bfe:=(1,\dots,1)\in\RR^n$.

The characteristic exponent of $V$ is (see~\cite{BKMS16}, their Formula~(2.9))
\begin{equation}\label{charexpoVG}
\Psi_V(\bftheta)=-b\ln\left\lbrace 1-\frac{\rmi\skal{\bfmu}{\bftheta}}{b}+\frac{\|\bftheta\|^2_\Sigma}{2b}\right\rbrace \,,\quad \bftheta\in\RR^n\,,
\end{equation}
where $\ln:\CC\backslash(-\infty,0]\to\CC$ is the principal branch of the logarithm.\\[1mm]
\noindent{\bf Strong variance-alpha-gamma process.}~Assume $n\!\ge\!2$. Let $\bfmu\!\in\!\RR^n$ and $\Sigma\in\![0,\infty)^{n\times n}$ be a {\em diagonal} matrix. A process $X\sim VAG^n(a,\bfalpha,\bfmu,\Sigma)$ is a {\em (strong) variance-alpha-gamma ($VAG^n$) process}~\cite{LS10,Se08} with parameters $a,\bfalpha,\bfmu,\Sigma$ if $X\sim BM^n(\bfmu,\Sigma)\circ AG^n_S(a,\bfalpha)$.

In \cite{BLMa}, the weak $VAG$ process was formulate using weak subordination, allowing $B$ to have dependent components while remaining a L\'evy process.\\[1mm]
\noindent{\bf Weak variance-alpha-gamma process.}~Assume $n\!\ge\!2$. Let $\bfmu\!\in\!\RR^n$ and $\Sigma\!\in\!\RR^{n\times n}$ be an {\em arbitrary} covariance matrix. The process $X\sim WVAG^n(a,\bfalpha,\allowbreak\bfmu,\Sigma)$ is a \emph{weak variance-alpha-gamma process}~\cite{BLMa} with parameters $a,\bfalpha,\bfmu,\Sigma$ if $X\sim BM^n(\bfmu,\Sigma)\odot AG^n_S(a,\bfalpha)$, where $\odot$ denotes the weak subordination operation (see the appendix).

Next, we gather various known results about the $WVAG$ process that will be useful later on. The notation $\tr$ is defined in \eqref{outproddefn} and self-decomposability is defined in the appendix.
\begin{proposition}\label{wvagprop}
	Let $n\geq2$ and $X\sim WVAG^n(a,\bfalpha,\bfmu,\Sigma)$.
	\begin{enumerate}
		\item[(i)] $X$ is an $n$-dimensional L\'evy process with L\'evy exponent
		\begin{align} \Psi_X(\bftheta)=&-a\;\ln\left\lbrace 1-\rmi \skal{\bfalpha\tr\bfmu}\bftheta+\frac{1}{2}\|\bftheta\|^2_{\bfalpha\tr\Sigma}\right\rbrace \nonumber\\
		&-\sum_{k=1}^n\beta_k\ln\left\lbrace 1-\rmi \alpha_k\mu_k\theta_k+\frac{1}{2}\alpha_k\theta_k^2\Sigma_{kk}\right\rbrace,\quad\bftheta\in\RR^n \,.\label{v-alpha-g-cf}
		\end{align}
		\item[(ii)] Let $V_0\sim VG^n\left( a,a\bfalpha\tr\bfmu,a\bfalpha\tr\Sigma\right)$, $V_k\sim VG^1(\beta_k,(1-a\alpha_k)\mu_k,(1\!-\!a\alpha_k)\allowbreak\Sigma_{kk})$, $1\!\le\! k\!\le\! n$ be independent. Then $X\eqd V_0+\sum_{k=1}^nV_k\bfe_k$.
		\item[(iii)]For any $c>0$, $(X(ct))_{t\geq0}\sim WVAG^n(ca,\bfalpha/c,c\bfmu,c\Sigma)$.
		\item[(iv)] For $1\!\le\! k\!\le\! n$, $X$ has marginal distribution $X_k\sim VG^1(1/\alpha_k,\mu_k,\Sigma_{kk})$.
		\item[(v)] If $\Sigma$ is diagonal, then $X\sim VAG^n(a,\bfalpha,\bfmu,\Sigma)$.
		\item[(vi)] For $1\!\le\! k\!\neq\!l\!\le\! n$, $\myCov(X_k(1),X_l(1))=a(\alpha_k\!\wedge\!\alpha_l)\Sigma_{kl}+a\alpha_k\alpha_l\mu_k\mu_l$.
		\item[(vii)] If $\Sigma$ is invertible, then $X$ is self-decomposable if and only if $\bfmu={\bf 0}$.
	\end{enumerate}
\end{proposition}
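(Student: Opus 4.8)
The plan is to reduce the self-decomposability of $X$ to that of the multivariate variance-gamma component $V_0$ in Proposition~\ref{wvagprop}(ii), and then to decide when a non-degenerate $VG^n$ law is self-decomposable. Throughout I use the standard criterion (cf.\ the appendix and \cite{BLMb}): an $n$-dimensional L\'evy process with L\'evy measure $\Pi$ is self-decomposable iff $\Pi(tB)\le\Pi(B)$ for every Borel $B\subseteq\RR^n$ and every $t>1$; equivalently, when $\Pi$ is absolutely continuous with density $g$, iff $r\mapsto r^ng(r\xi)$ is nonincreasing for almost every $\xi$ on the unit sphere. By Proposition~\ref{wvagprop}(ii), $\Pi_X=\Pi_{V_0}+\sum_{k=1}^n\Pi_{V_k\bfe_k}$, where $V_0\sim VG^n(a,a\bfalpha\tr\bfmu,a\bfalpha\tr\Sigma)$ has an absolutely continuous L\'evy measure on $\RR^n$ (being a Gaussian mean--variance mixture), while each $\Pi_{V_k\bfe_k}$ is carried by the line $\RR\bfe_k$. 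Since for $n\ge2$ these pieces are mutually singular and the dilations $x\mapsto tx$ fix each line and preserve absolute continuity, the criterion for $\Pi_X$ holds iff it holds for $\Pi_{V_0}$ and for each $\Pi_{V_k\bfe_k}$ separately. Each $V_k\sim VG^1$ is automatically self-decomposable: its L\'evy density is $c|u|^{-1}e^{Au-G_\pm|u|}$ with $G_\pm>|A|$, so the radial profiles $u\mapsto ce^{-(G_+-A)u}$ $(u>0)$ and $u\mapsto ce^{-(G_-+A)|u|}$ $(u<0)$ are decreasing. Hence $X$ is self-decomposable iff $V_0$ is.

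Write $\bfmu_0:=a\bfalpha\tr\bfmu$ and $\Sigma_0:=a\bfalpha\tr\Sigma$; then $\bfmu_0=\bfnull$ iff $\bfmu=\bfnull$ (as $a>0$, $\alpha_k>0$), and $\Sigma_0$ is positive definite whenever $\Sigma$ is: indeed $(\bfalpha\tr\Sigma)_{kl}=(\alpha_k\wedge\alpha_l)\Sigma_{kl}$ is the Schur product of $\Sigma$ with $(\alpha_k\wedge\alpha_l)_{k,l}$, the latter being the covariance matrix of $(W(\alpha_1),\dots,W(\alpha_n))$ for a standard Brownian motion $W$, hence positive semidefinite with strictly positive diagonal, so the Schur product theorem applies. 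Writing the $VG^n$ L\'evy density as $g(x)=\int_0^\infty(2\pi s)^{-n/2}(\det\Sigma_0)^{-1/2}\exp(-(x-s\bfmu_0)\Sigma_0^{-1}(x-s\bfmu_0)'/(2s))\,ae^{-as}s^{-1}\,ds$, completing the square in $s$ and substituting $s=r^2u$ yields, for $x=r\xi$,
\begin{equation*}
r^ng(r\xi)=e^{\beta(\xi)r}h(r^2;\xi),\quad \beta(\xi):=\xi\Sigma_0^{-1}\bfmu_0',\quad h(v;\xi):=a(\det\Sigma_0)^{-1/2}\!\int_0^\infty(2\pi u)^{-n/2}e^{-\frac12\|\xi\|^2_{\Sigma_0^{-1}}/u-\lambda vu}u^{-1}\,du,
\end{equation*}
with $\lambda:=a+\tfrac12\|\bfmu_0\|^2_{\Sigma_0^{-1}}>0$; here $v\mapsto h(v;\xi)$ is positive, nonincreasing, continuous on $[0,\infty)$ and smooth on $(0,\infty)$, with $h(0;\xi)\in(0,\infty)$ and $r\,\partial_vh(r^2;\xi)\to0$ as $r\to0^+$.

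If $\bfmu=\bfnull$, then $\beta\equiv0$, so $r\mapsto r^ng(r\xi)=h(r^2;\xi)$ is nonincreasing, whence $V_0$ and therefore $X$ is self-decomposable. If $\bfmu\neq\bfnull$, put $\xi_0:=\bfmu_0/\|\bfmu_0\|$; since $\Sigma_0^{-1}$ is positive definite, $\beta(\xi_0)=\|\bfmu_0\|^2_{\Sigma_0^{-1}}/\|\bfmu_0\|>0$, and
\begin{equation*}
\tfrac{d}{dr}\bigl[r^ng(r\xi_0)\bigr]=e^{\beta(\xi_0)r}\bigl(\beta(\xi_0)h(r^2;\xi_0)+2r\,\partial_vh(r^2;\xi_0)\bigr)\longrightarrow\beta(\xi_0)h(0;\xi_0)>0\quad\text{as }r\to0^+,
\end{equation*}
so $r\mapsto r^ng(r\xi)$ is strictly increasing near $0$ for every $\xi$ in an open spherical neighbourhood of $\xi_0$. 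Thus $\Pi_{V_0}$ fails the criterion, so neither $V_0$ nor $X$ is self-decomposable. Combining the two cases proves the equivalence.

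The main obstacle is the direction $\bfmu\neq\bfnull\Rightarrow X$ not self-decomposable. The obstruction is purely local at the origin: by Cauchy--Schwarz for the inner product $u\mapsto u\Sigma_0^{-1}u'$, together with $a>0$, one has $\beta(\xi)^2\le\|\xi\|^2_{\Sigma_0^{-1}}\,\|\bfmu_0\|^2_{\Sigma_0^{-1}}<2\lambda\,\|\xi\|^2_{\Sigma_0^{-1}}$ for $\xi$ on the sphere, so the exponential factor $e^{\beta(\xi)r}$ is still dominated by the decay of $h(r^2;\xi)$ as $r\to\infty$; hence no contradiction arises from the tails, and one is forced to analyse $r^ng(r\xi)$ as $r\to0^+$, which is the content of the last assertion of the second paragraph. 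In particular the estimate $r\,\partial_vh(r^2;\xi)\to0$ carries a harmless logarithmic singularity when $n=2$ and is the only place the hypothesis $n\ge2$ is genuinely needed. A secondary technical point is the decoupling used in the first paragraph, namely that self-decomposability of $X$ separates across the absolutely continuous component $V_0$ and the one-dimensional axis components $V_k\bfe_k$.
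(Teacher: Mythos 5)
Your proposal addresses only part (vii) of the proposition; parts (i)--(vi) are not touched at all, and your argument for (vii) takes the decomposition $X\eqd V_0+\sum_{k=1}^nV_k\bfe_k$ of part (ii) as an input. Since the paper's own ``proof'' of the entire proposition is a collection of citations (parts (i), (ii), (iv)--(vi) to \cite{BLMa}, (iii) to \cite{MiSz17}, and (vii) to \cite{BLMb}), the honest comparison is this: for (vii) you supply a genuine, essentially self-contained argument where the paper merely defers to \cite{BLMb}, but as a proof of the stated proposition the proposal is incomplete --- you would at minimum need to prove or explicitly cite (i)--(vi), and in particular justify (ii), on which your reduction rests.

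The substance of your (vii) argument is sound and is a legitimately different route from the paper's. The reduction to $V_0$ via mutual singularity of the absolutely continuous piece $\Pi_{V_0}$ and the axis-supported pieces $\Pi_{V_k\bfe_k}$ (all carriers being dilation-invariant) is correct, as is the automatic self-decomposability of the one-dimensional $VG$ components and the positive definiteness of $a\bfalpha\tr\Sigma$ via the Schur product theorem. The radial computation $r^ng(r\xi)=e^{\beta(\xi)r}h(r^2;\xi)$ is right, and you were correctly careful at the one point where the argument could collapse: the claim $r\,\partial_vh(r^2;\xi)\to0$ as $r\to0^+$ fails for $n=1$ (where $\partial_vh(v;\xi)\asymp -v^{-1/2}$, consistent with $VG^1$ always being self-decomposable) but holds for $n\ge2$, with only a logarithmic singularity of $\partial_vh$ at $v=0$ when $n=2$. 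This is exactly the dimensional dichotomy that makes (vii) true, and your localisation of the obstruction at the origin in the direction $\xi_0=\bfmu_0/\|\bfmu_0\|$, extended to an open spherical neighbourhood so that the failure set has positive surface measure, correctly rules out any polar representation with nonincreasing radial profiles. What your approach buys is a transparent, hands-on proof of the dichotomy; what the paper's citation buys is coverage of the general variance generalised gamma convolution setting treated in \cite{BLMb}. To stand as a proof of Proposition 2.1 you should either similarly derive (i)--(vi) (all of which follow from the weak subordination formula \eqref{wsexp} and the $AG^n_S$ L\'evy measure) or state clearly that they are quoted from \cite{BLMa,MiSz17}.
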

\begin{proof}
See \cite{BLMa} for (i), (ii), (iv)-(vi), \cite{MiSz17} for (iii), and \cite{BLMb} for (vii).
\end{proof}
The $WVAG $ process exhibits a wider range of dependence than the $VAG $ process. For example, it has an additional covariance term $a(\alpha_k\!\wedge\!\alpha_l)\Sigma_{kl}$ from Proposition \ref{wvagprop} (vi).

\section{Fourier Invertibility}

Let $n\geq2$ and $X\sim WVAG ^n(a,\bfalpha,\bfmu,\Sigma)$ and $Y\!=\!I\bfm + X$, $\bfm\in\RR^n$, where $I:[0,\infty)\to[0,\infty)$ is the identity function. The density function of $Y(t)$, $t>0$, which is needed for MLE, exists because $Y(t)$ has an absolutely continuous distribution for. However, it is not explicitly known, so it is computed using Fourier inversion as
\begin{align}\label{fourierinvert}
f_{Y(t)}(\bfy) = (2\pi)^{-n}\int_{\mathbb{R}^n}\exp(-\rmi \skal{\bftheta}{\bfy-\mathbf{m}})\Phi_{X(t)}(\bftheta)\,\rmd\bftheta,\quad \bfy\in\RR^n\,,
\end{align}
where $\Phi_{X(t)}(\bftheta)=\exp(t\Psi_X(\bftheta))$, $\bftheta\in\RR^n$, and $\Psi_X(\bftheta)$ from \eqref{v-alpha-g-cf}, provided $\Phi_{X(t)}\in L^1$. If $\Phi_{X(t)}\in L^1$, we say that $X(t)$ is Fourier invertible and we give a condition for this in terms of an inequality relating the parameters.

\begin{lemma}\label{lemma1}
	Let $B\sim BM^n(\bfmu,\Sigma)$, $B^* \sim BM^n({\bf 0},\Sigma)$, $T \sim S^n(\bfnull,\TTT)$, $X \eqd B \odot T$, $X^* \eqd B^*\odot T$, $Y\eqd I\bfm+X$, $\bfm \in \RR^n$. For all $t\geq0$ and $p>0$, if $\Phi_{X^*(t)}\in L^p$, then $\Phi_{Y(t)}\in L^p$.
\end{lemma}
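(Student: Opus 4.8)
The plan is to prove the stronger pointwise bound $|\Phi_{Y(t)}(\bftheta)|\le|\Phi_{X^*(t)}(\bftheta)|$ for every $\bftheta\in\RR^n$ and then integrate $p$-th powers. The linear drift $I\bfm$ contributes only a unimodular factor, $\Phi_{Y(t)}(\bftheta)=\exp(\rmi t\skal{\bfm}{\bftheta})\Phi_{X(t)}(\bftheta)$, so $|\Phi_{Y(t)}(\bftheta)|=|\Phi_{X(t)}(\bftheta)|$ and it suffices to compare $X\eqd B\odot T$ with $X^*\eqd B^*\odot T$, which are built from the \emph{same} $T$ and $\Sigma$ and differ only in the Brownian drift ($\bfmu$ versus $\bfnull$). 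For $t=0$ the hypothesis is vacuous since $\Phi_{X^*(0)}\equiv1\notin L^p(\RR^n)$, so take $t>0$.

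For the comparison I would invoke the characteristic exponent of a weak subordination from the appendix. As $T$ is driftless and $B\sim BM^n(\bfmu,\Sigma)$ has no jumps, this reduces to
\[
\Psi_{B\odot T}(\bftheta)=\int_{[0,\infty)^n}\bigl(\exp(\psi(\bfs,\bftheta))-1\bigr)\,\TTT(\rmd\bfs),\qquad \psi(\bfs,\bftheta)=\rmi\sum_{k=1}^n\mu_k s_k\theta_k-\tfrac12\|\bftheta\|^2_{\bfs\tr\Sigma},
\]
where $\bfs\tr\Sigma$ has $(k,l)$-entry $(s_k\wedge s_l)\Sigma_{kl}$; replacing $\bfmu$ by $\bfnull$ gives $\Psi_{B^*\odot T}$ with integrand exponent $\psi^*(\bfs,\bftheta)=-\tfrac12\|\bftheta\|^2_{\bfs\tr\Sigma}$. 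Two facts are needed: $\bfs\tr\Sigma$ is real symmetric — indeed positive semidefinite, being the covariance of $(W_1(s_1),\dots,W_n(s_n))$ for $W\sim BM^n(\bfnull,\Sigma)$ — so $\psi^*(\bfs,\bftheta)$ is real and $\le0$; and the Brownian drift enters $\psi$ only in the purely imaginary first term, so $\mathrm{Re}\,\psi(\bfs,\bftheta)=\psi^*(\bfs,\bftheta)$ for every $\bfs$. (The integrands are $\TTT$-integrable: bounded by $2$ since $\mathrm{Re}\,\psi\le0$, and $O(|\bfs|)$ near the origin, so the representation holds without a compensator.)

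The crux is then one elementary inequality: for each $\bfs$,
\[
\mathrm{Re}\bigl(\exp(\psi(\bfs,\bftheta))-1\bigr)=\exp\bigl(\mathrm{Re}\,\psi(\bfs,\bftheta)\bigr)\cos\bigl(\mathrm{Im}\,\psi(\bfs,\bftheta)\bigr)-1\le\exp\bigl(\psi^*(\bfs,\bftheta)\bigr)-1,
\]
using only $\cos\le1$. Integrating against $\TTT$ gives $\mathrm{Re}\,\Psi_{B\odot T}(\bftheta)\le\Psi_{B^*\odot T}(\bftheta)$, the right-hand side being real; since $\Phi_{X(t)}(\bftheta)=\exp(t\Psi_{B\odot T}(\bftheta))$ and $\Phi_{X^*(t)}(\bftheta)=\exp(t\Psi_{B^*\odot T}(\bftheta))>0$, exponentiating yields $|\Phi_{X(t)}(\bftheta)|\le\Phi_{X^*(t)}(\bftheta)=|\Phi_{X^*(t)}(\bftheta)|$. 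Together with the first paragraph this is the desired pointwise bound; raising it to the power $p>0$ and integrating over $\RR^n$ gives $\Phi_{Y(t)}\in L^p$ whenever $\Phi_{X^*(t)}\in L^p$.

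The only real obstacle is stating the weak subordination characteristic exponent in precisely the form above and checking that passing from $\bfmu$ to $\bfnull$ leaves the (real) Gaussian part of each integrand untouched; after that everything is the one-line bound $\exp(r)\cos v\le\exp(r)$ together with monotonicity of $x\mapsto x^p$ on $[0,\infty)$. An alternative would be to condition on the whole path of $T$, under which weakly subordinated Brownian motion is Gaussian with covariance independent of $\bfmu$ and mean having $k$-th coordinate $\mu_k T_k(t)$, so that $|\Phi_{X(t)}(\bftheta)|$ is bounded by the expectation of the nonnegative, $\bfmu$-free conditional Gaussian factor, which is $\Phi_{X^*(t)}(\bftheta)$; but this requires the conditional structure of weak subordination, so the characteristic-exponent route is the more self-contained one.
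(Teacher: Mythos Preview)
Your proof is correct and follows essentially the same route as the paper's: strip off the unimodular factor from $I\bfm$, use the weak-subordination exponent \eqref{wsexp}, bound $\Re\bigl(e^{\psi(\bfs,\bftheta)}-1\bigr)\le e^{\psi^*(\bfs,\bftheta)}-1$ pointwise in $\bfs$ (the paper writes this as $\Re\Phi_{B(\bft)}(\bftheta)\le|\Phi_{B(\bft)}(\bftheta)|$, which is your $\cos\le1$ inequality), integrate against $\TTT$, and exponentiate. Your extra remarks on the $t=0$ case, integrability of the integrand, and positive semidefiniteness of $\bfs\tr\Sigma$ are sound but not needed beyond what the paper uses.
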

\begin{proof} 
	For all $t\geq0$, $\Phi_{Y(t)}(\bftheta)=e^{\rmi t\skal{\bftheta}{\bfm}}\Phi_{X(t)}(\bftheta)$, $\bftheta\in\RR^n$, so that $|\Phi_{Y(t)}(\bftheta)|\allowbreak=\exp(t\Re \Psi_{X}(\bftheta))$. Using \eqref{wsexp}, we have
	\begin{align*}
		\Re \Psi_{X}(\bftheta)&=\int_{[0,\infty)_*^n} (\Re\Phi_{B(\bft)}(\bftheta)-1)\,\TTT(\rmd \bft)\\
		&\le \int_{[0,\infty)_*^n} (|\Phi_{B(\bft)}(\bftheta)|-1)\,\TTT(\rmd \bft)\\
		&=\int_{[0,\infty)_*^n} \left( \exp\left(-\frac{1}{2}\|\bftheta\|^2_{\bft\tr\Sigma}\right)-1\right) \,\TTT(\rmd \bft)\\
		&=\Re \Psi_{X^*}(\bftheta).
	\end{align*}
	Therefore, $|\Phi_{Y(t)}(\bftheta)|\le|\Phi_{X^*(t)}(\bftheta)|$, from which the result follows.
\end{proof}

\begin{lemma}\label{fourierinvpropvg}
	Let $V\sim VG^n(b,\bfmu,\Sigma)$, and assume that $\Sigma$ is invertible. Let $p>0$. If $pb>n/2$, then $\Phi_{V}\in L^p$.
\end{lemma}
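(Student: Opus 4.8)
The plan is to estimate $|\Phi_V(\bftheta)|$ directly from the closed-form characteristic exponent \eqref{charexpoVG} and show that the resulting bound is $p$-integrable precisely when $pb>n/2$. Writing $\Phi_V(\bftheta)=\exp(\Psi_V(\bftheta))$ with $\Psi_V$ from \eqref{charexpoVG}, we have
\begin{equation*}
|\Phi_V(\bftheta)| = \left|1-\frac{\rmi\skal{\bfmu}{\bftheta}}{b}+\frac{\|\bftheta\|^2_\Sigma}{2b}\right|^{-b}
= \left(\Bigl(1+\frac{\|\bftheta\|^2_\Sigma}{2b}\Bigr)^2+\frac{\skal{\bfmu}{\bftheta}^2}{b^2}\right)^{-b/2}.
\end{equation*}
Since the real part $1+\|\bftheta\|^2_\Sigma/(2b)$ is already bounded below by $\|\bftheta\|^2_\Sigma/(2b)$, I would drop the drift term entirely (it only helps) and bound $|\Phi_V(\bftheta)|\le (2b)^{b}\,\|\bftheta\|_\Sigma^{-2b}$ for $\bftheta\neq\bfnull$, and also $|\Phi_V(\bftheta)|\le 1$ everywhere. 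Then
\begin{equation*}
\int_{\RR^n}|\Phi_V(\bftheta)|^p\,\rmd\bftheta \le \int_{\|\bftheta\|\le 1} 1\,\rmd\bftheta + (2b)^{pb}\int_{\|\bftheta\|>1}\|\bftheta\|_\Sigma^{-2pb}\,\rmd\bftheta,
\end{equation*}
and the first integral is finite while the second is a standard radial integral that converges iff $2pb>n$, i.e. $pb>n/2$.

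The one place where invertibility of $\Sigma$ is essential is in passing from the quadratic form $\|\bftheta\|^2_\Sigma=\bftheta\Sigma\bftheta'$ to the Euclidean norm: since $\Sigma$ is a positive-definite symmetric matrix, $\|\bftheta\|^2_\Sigma\ge\lambda_{\min}(\Sigma)\|\bftheta\|^2$ with $\lambda_{\min}(\Sigma)>0$, so $\|\bftheta\|_\Sigma^{-2pb}\le \lambda_{\min}(\Sigma)^{-pb}\|\bftheta\|^{-2pb}$, reducing the tail integral to $\int_{\|\bftheta\|>1}\|\bftheta\|^{-2pb}\,\rmd\bftheta$, which in polar coordinates equals $c_n\int_1^\infty r^{n-1-2pb}\,\rmd r<\infty$ exactly when $n-2pb<-1$... — more carefully, the exponent condition for convergence of $\int_1^\infty r^{n-1-2pb}\rmd r$ is $n-1-2pb<-1$, i.e. $2pb>n$, matching the hypothesis. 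Without invertibility $\Sigma$ would be degenerate and $\Phi_V$ constant in some directions, so no amount of decay could make it integrable; hence the assumption cannot be dropped.

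I do not anticipate a serious obstacle here; the only mild care needed is the case analysis near the origin (where the $-b/2$ power would blow up if the base could vanish, but the base is $\ge 1$ there, so $|\Phi_V|\le 1$ is safe) and keeping track that the bound $|\Phi_V(\bftheta)|\le(2b)^{b}\|\bftheta\|_\Sigma^{-2b}$ is used only away from the origin. If one wants a cleaner single estimate, note $(1+x)^2\ge \max(1,x^2)$ for $x\ge0$ gives $|\Phi_V(\bftheta)|\le(1+(\|\bftheta\|^2_\Sigma/(2b))^2)^{-b/2}\le (1+ (\lambda_{\min}(\Sigma)/(2b))^2\|\bftheta\|^4)^{-b/2}$, and $p$-integrability of this over $\RR^n$ is again equivalent to $2pb>n$ by the radial test. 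Either route finishes the lemma.
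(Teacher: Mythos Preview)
Your argument is correct. Both your proof and the paper's reduce the question to the convergence of a radial integral $\int r^{n-1-2pb}\,\rmd r$ at infinity, but the routes differ in two places. First, to handle $\bfmu$: the paper invokes Lemma~\ref{lemma1} (using that $VG^n$ is a weakly subordinated Brownian motion) to reduce to the centred case $\bfmu=\bfnull$, whereas you keep $\bfmu$ and simply drop the nonnegative term $\skal{\bfmu}{\bftheta}^2/b^2$ from the explicit modulus, which is more self-contained. Second, to pass from $\|\cdot\|_\Sigma$ to $\|\cdot\|$: the paper performs an exact linear change of variables via the Cholesky factor of $\Sigma$, turning the integral into $\int_{\RR^n}(1+\|\bfx\|^2)^{-pb}\,\rmd\bfx$ and thereby obtaining that $pb>n/2$ is also \emph{necessary} in the centred case; you instead use the eigenvalue bound $\|\bftheta\|^2_\Sigma\ge\lambda_{\min}(\Sigma)\|\bftheta\|^2$, which is quicker but yields only the sufficiency direction that the lemma actually asserts. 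Either approach is adequate for the stated lemma.
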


\begin{proof} Since variance-gamma processes are weakly subordinated processes (see \eqref{strongequalsweak}), we can apply Lemma~\ref{lemma1}, which means that we can assume $\bfmu=\bfnull$. For $V\sim VG^n(b,0,\Sigma)$, by \eqref{charexpoVG}, $V$ has characteristic function
	\begin{align*}
		\Phi_{V}(\bftheta)=\left(1+\frac{\left\| \bftheta\right\|_{\Sigma}^2}{2b}\right)^{-b},\quad\bftheta\in\RR^n.
	\end{align*}
	Using the Cholesky decomposition, $\Sigma=U'U$, where $U$ is a lower triangular matrix with positive elements on the diagonal. Let $p>0$. Making the transformation $\bftheta=(2b)^{1/2}\bfx (U')^{-1}$, noting that $(U')^{-1}$ exists, and hence the transformation is injective, we have
	\begin{align}\label{vgfourierint}
		\int_{\RR^n} |\Phi_{V}(\bftheta)|^p\,\rmd\bftheta=|(2b)^{1/2}U^{-1}|\int_{\RR^n} \left( 1+\left\| \bfx\right\|^2\right) ^{-pb} \,\rmd\bfx.
	\end{align}
	Using the polar decomposition (see Corollary~B.7.7 in \cite{sas13}) on the RHS of \eqref{vgfourierint}, we have $\Phi_{V}\in L^p$ if and only if
	\begin{align*}
		\int_0^\infty (1+r^2)^{-pb}r^{n-1}\,\rmd r<\infty,
	\end{align*}
	which is equivalent to $pb>n/2$.
\end{proof}


\begin{proposition}\label{fourierinvprop}
	Let $X\sim WVAG ^n(a,\bfalpha,\bfmu,\Sigma)$ and $Y \eqd I\bfm + X$, $\bfm\in\RR^n$. Assume that $\Sigma$ is invertible. For $t>0$, if
	\begin{align}\label{cond}
		\left(\frac{a}{n}+\min_{1\le k \le n}\beta_k\right)t>\frac{1}{2},
	\end{align}
	then $\Phi_{X(t)},\Phi_{Y(t)}\in L^1$.
\end{proposition}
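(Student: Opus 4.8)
The plan is to combine the independent‐sum decomposition of Proposition~\ref{wvagprop}(ii) with a H\"older splitting of the characteristic function into one ``common'' $VG^n$ factor and $n$ one‐dimensional ``idiosyncratic'' $VG^1$ factors, apply Lemma~\ref{fourierinvpropvg} to each, and observe that \eqref{cond} is exactly the condition under which the required H\"older exponents can be chosen. First, since $Y(t)=\bfm t+X(t)$ we have $|\Phi_{Y(t)}(\bftheta)|=|\Phi_{X(t)}(\bftheta)|$ for all $\bftheta\in\RR^n$, so it suffices to show $\Phi_{X(t)}\in L^1(\RR^n)$ (alternatively one may invoke Lemma~\ref{lemma1}). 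By Proposition~\ref{wvagprop}(ii), $X\eqd V_0+\sum_{k=1}^nV_k\bfe_k$ with $V_0\sim VG^n(a,a\bfalpha\tr\bfmu,a\bfalpha\tr\Sigma)$ and $V_k\sim VG^1(\beta_k,(1-a\alpha_k)\mu_k,(1-a\alpha_k)\Sigma_{kk})$, all independent. Since these processes are independent and $VG$ processes satisfy $VG^n(b,\bfmu,\Sigma)(t)\eqd VG^n(bt,\bfmu t,\Sigma t)(1)$, this yields
\begin{equation*}
\Phi_{X(t)}(\bftheta)=\Phi_{V_0(t)}(\bftheta)\prod_{k=1}^n\Phi_{V_k(t)}(\theta_k),\qquad\bftheta\in\RR^n,
\end{equation*}
where $V_0(t)$ has the law of $VG^n(at,at\bfalpha\tr\bfmu,at\bfalpha\tr\Sigma)(1)$ and each $V_k(t)$ the law of $VG^1(\beta_k t,(1-a\alpha_k)\mu_k t,(1-a\alpha_k)\Sigma_{kk}t)(1)$.

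Next, fix conjugate exponents $p,q\in(1,\infty)$ with $1/p+1/q=1$, to be chosen below. By H\"older's inequality, followed by Fubini's theorem on the product factor,
\begin{equation*}
\int_{\RR^n}|\Phi_{X(t)}(\bftheta)|\,\rmd\bftheta\le\Big(\int_{\RR^n}|\Phi_{V_0(t)}(\bftheta)|^p\,\rmd\bftheta\Big)^{1/p}\prod_{k=1}^n\Big(\int_{\RR}|\Phi_{V_k(t)}(\theta_k)|^q\,\rmd\theta_k\Big)^{1/q}.
\end{equation*}
For each idiosyncratic factor, $\Sigma_{kk}>0$ (as $\Sigma$ is positive definite) and $a\alpha_k<1$, so the relevant $1$‐dimensional covariance is invertible; Lemma~\ref{fourierinvpropvg} with $n=1$ then gives $\Phi_{V_k(t)}\in L^q(\RR)$ whenever $q\beta_kt>1/2$, hence for every $k$ as soon as $q(\min_{1\le k\le n}\beta_k)t>1/2$. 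For the common factor one first records that $\bfalpha\tr\Sigma$ is positive definite whenever $\Sigma$ is (this is in~\cite{BLMa}; it also follows from the representation $\bfx(\bfalpha\tr\Sigma)\bfx'=\int_0^\infty\bfx_s\Sigma\bfx_s'\,\rmd s$ with $(\bfx_s)_k=x_k\mathbf{1}\{s<\alpha_k\}$, using $(\bfalpha\tr\Sigma)_{kl}=(\alpha_k\wedge\alpha_l)\Sigma_{kl}$). Then $at\,\bfalpha\tr\Sigma$ is invertible, and Lemma~\ref{fourierinvpropvg} gives $\Phi_{V_0(t)}\in L^p(\RR^n)$ whenever $pat>n/2$.

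It remains to choose $p,q$. Putting $\lambda:=1/p\in(0,1)$, the two requirements read $\lambda<2at/n$ and $1-\lambda<2(\min_k\beta_k)t$, and a $\lambda\in(0,1)$ meeting both exists if and only if $1<2at/n+2(\min_k\beta_k)t$, i.e. exactly when \eqref{cond} holds; fixing any such $\lambda$ makes both integrals on the right‐hand side finite, so $\Phi_{X(t)}\in L^1(\RR^n)$ and hence $\Phi_{Y(t)}\in L^1(\RR^n)$. I expect the main obstacle to be spotting this H\"older step: applying it to the common/idiosyncratic factorization is what separates the ``$a$‐mass'' (which need only beat $n/2$ after multiplication by $p$) from the ``$\beta$‐mass'' (each $\beta_k$ need only beat $1/(2q)$, and in one dimension), and one must verify that the feasibility of the exponent choice coincides precisely with \eqref{cond}; a cruder single bound on $|\Phi_{X(t)}|$ only yields the strictly stronger condition $(a+\min_k\beta_k)t>n/2$. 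The positive‐definiteness of $a\bfalpha\tr\Sigma$, needed before Lemma~\ref{fourierinvpropvg} applies to $V_0$, is a secondary technical point, and the remaining computations are routine.
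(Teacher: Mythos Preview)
Your proof is correct and follows essentially the same route as the paper: the decomposition from Proposition~\ref{wvagprop}(ii), H\"older's inequality to split the common $VG^n$ factor from the idiosyncratic $VG^1$ factors, and Lemma~\ref{fourierinvpropvg} on each piece, with the feasibility of the H\"older exponents coinciding with~\eqref{cond}. The only cosmetic differences are that the paper first reduces to $t=1$ via Proposition~\ref{wvagprop}(iii) and to $\bfmu=\bfnull$, $\bfm=\bfnull$ via Lemma~\ref{lemma1}, whereas you carry $t$ through directly; your explicit verification that $\bfalpha\tr\Sigma$ is positive definite (via the integral representation) is a useful addition that the paper leaves implicit.
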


\begin{proof} By Proposition~\ref{wvagprop} (iii), it suffices to prove the result for $t=1$, and by Lemma~\ref{lemma1}, we can assume $\bfmu=\bfnull$ and $\bfm=\bfnull$, so that $Y\sim WVAG ^n(a,\bfalpha,{\bf 0},\Sigma)$.
	
	Let $V_0\sim VG^n(a,\mathbf{0},\allowbreak a \bfalpha\tr\Sigma)$, $V_k\sim VG^1\left(\beta_k,0,(1-a\alpha_k)\Sigma_{kk}\right)$, $1 \leq k \leq n$, be independent, and let $V^*:=(V_1,\dots,V_n)$. By Proposition~\ref{wvagprop} (ii), $Y$ has characteristic function $\Phi_Y(\bftheta)=\Phi_{V_0}(\bftheta)\Phi_{V^*}(\bftheta)$, where $\Phi_{V^*}(\bftheta):=\prod_{k=1}^n\Phi_{V_k}(\theta_k)$. For $p^{-1}+q^{-1}=1$, $p,q>1$, 
	H\"{o}lder's inequality gives
	\begin{align*}
		\int_{\RR^n}|\Phi_Y(\bftheta)|\,\rmd\bftheta&\le \left( \int_{\RR^n}|\Phi_{V_0}(\bftheta)|^{p}\,\rmd\bftheta\right)^{1/p} \left( \int_{\RR^n}|\Phi_{V^*}(\bftheta)|^{q}\,\rmd\bftheta\right)^{1/q}\\
		&= \left( \int_{\RR^n}|\Phi_{V_0}(\bftheta)|^{p}\,\rmd\bftheta\right)^{1/p} \prod_{k=1}^n\left( \int_{\RR}|\Phi_{V_k}(\theta)|^{q}\,\rmd\theta\right)^{1/q}.
	\end{align*}
	By Lemma~\ref{fourierinvpropvg}, this integral is finite when $pa>n/2$, $q\beta_k>1/2$ and $p,q>1$ for all $1\le k \le n$. Thus,
	\begin{align*}
		1=\frac{1}{p}+\frac{1}{q}<2\left(\frac{a}{n}\wedge\frac{1}{2}\right) +2\left( \min_{1\le k \le n}\beta_k\wedge\frac 12\right) ,
	\end{align*}
	which is equivalent to \eqref{cond}.
\end{proof}

\begin{remark} Note that this condition for a $VG^1(b,\mu,\Sigma)$ distribution to be Fourier invertible is identical to the condition for its density function having no singularity in \cite{KuTa}, which is $b>1/2$.
\end{remark}

\begin{remark} We see that for sufficiently small $t>0$, \eqref{cond} will not be satisfied. This means that using \eqref{fourierinvert} to compute the density function may not be valid when attempting parameter estimation for a $WVAG $ process based on observations from such a sufficiently small sampling interval.
\end{remark}
\section{Calibration}

We now specialise to the case of $n=2$. Let $X\sim WVAG ^2(a,\bfalpha,\bfmu,\Sigma)$, $Y:=(Y_1,Y_2)=I\bfm +X$, $\mathbf{m}\in\RR^2$. Let $(S_1, S_2)$ be a bivariate price process
\begin{align}\label{priceproc}
S_k(t)=S_k(0)\exp(Y_k(t)),\quad t\ge0,\quad k=1,2.
\end{align}
For $N$ equally spaced discrete observations with sampling interval $c>0$, the log returns are
\begin{eqnarray*}
	\mathbf{y}_j:=(y_{1j},y_{2j}):=\Big(\ln\frac{S_1(jc)}{S_1((j-1)c)},
	\ln\frac{S_2(jc)}{S_2((j-1)c)}\Big)\stackrel{D}{=} Y(c),\quad j=1,\dots, N,
\end{eqnarray*}
and are iid. We call this the {\em $WVAG $ model}. If $\Sigma_{12}=0$, we called it the {\em $VAG $ model} as $X$ reduces to a $VAG $ process by Proposition \ref{wvagprop} (v).

\subsection{Simulation method}

The result in Proposition \ref{wvagprop} (ii) can be used to simulate $X\sim WVAG ^2(a,\bfalpha,\allowbreak\bfmu,\Sigma)$ in terms of $VG^n$ and $VG^1$ processes.

For the sampling intervals $c=1,0.1$ and sample size $N=1000$, we make 100 simulations of $Y$, and estimate the parameters from the observations $(\mathbf{y}_j)_{j=1}^N$ with true parameters $a=1$, $\bfalpha=(0.8,0.6)$, $\bfmu=(0.1,-0.3)$, $\Sigma=[1,0.6;0.6,1.2]$, $\bfm=(-0.1,0.3)$.

\subsection{Calibration methods}\label{parestsec}

We estimate the parameters $(a,\bfalpha,\bfmu,\Sigma,\bfm)$ from the observations $(\mathbf{y}_j)_{j=1}^N$ using method of moments (MOM), which is quick and easy to implement, maximum likelihood estimation (MLE) from Micha\-elsen \& Szimayer~\cite{MiSz17}, which may be expected as being asymptotically optimal under the model, and a modification of digital moment estimation (DME) from Madan \cite{Ma15}, which is more robust to model misspecification.

\noindent{\bf Method of moments.} The initial values of $\mu_k,\alpha_k,\Sigma_{kk},\allowbreak m_k$, $k=1,2$, are obtained by least squares on the first four central moments $\EE(Y_k(c))$, $\EE((Y_k(c)-\EE(Y_k(c)))^p)$, $p=2,3,4$, with the corresponding sample moments. The initial values of the joint parameters $a, \Sigma_{12}$ are obtained by least squares on $\EE((Y_1(c)-\EE(Y_1(c)))^p(Y_2(c)\allowbreak-\EE(Y_2(c)))^p)$, $p=1,2$, with the corresponding sample moments, with $p=1$ excluded when fitting the $VAG $ model. Using these initial values, least squares is solved over all parameters. Note that this last step has no effect when these moments can be matched exactly.

Moment formulas can be found in \cite{MiSz17}.\\[1mm]
\noindent{\bf Maximum likelihood estimation.}
The density function of $Y(c)$ is not explicitly known so it is numerically computed using Fourier inversion by \eqref{fourierinvert}. The numerical optimisation needed to implement MLE requires initial values. The first initial values can be obtained by MOM. Using the first initial values, MLE is applied to each marginal observations to obtain the second initial values of $\mu_k,\alpha_k,\Sigma_{kk},\allowbreak m_k$, $k=1,2$, and to the bivariate observations to obtain the second initial values of $a, \Sigma_{12}$. Finally, using the second initial values, MLE is applied on all parameters. For the $VAG $ model, we apply the above method with the constraint $\Sigma_{12}=0$.\\[1mm]
\noindent{\bf Digital moment estimation.}
Let $k=1,2$, let $\mathbf{q}$ be the vector of 10 equally spaced points from 0.05 to 0.95, and let $\mathbb{P}_k$ be the empirical quantiles of the observations $(y_{kj})_{j=1}^N$ at the probabilities $\mathbf{q}$. Let $p_{y}(\mu_k,\alpha_k,\Sigma_{kk},m_k) := \PP(Y_k(c)\leq y)$, where $y\in \mathbb{P}_k$, $Y_k\sim m_kI+VG^1(1/\alpha_k,\mu_k,\Sigma_{kk})$ (see Proposition \ref{wvagprop} (iv)), and $q_{y}$ is the corresponding empirical probability. Marginal parameters $\mu_k,\alpha_k,\Sigma_{kk},\allowbreak m_k$ are estimated by minimizing the error $\sum_{y\in\mathbb{P}_k} (p_y(\mu_k,\alpha_k,\allowbreak\Sigma_{kk},m_k)-q_y)^2$.
		
With the estimated marginal parameters, let $\rho:=\Sigma_{12}\big/(\Sigma_{11}\Sigma_{22})^{1/2}$, $p_{\mathbf{y}}(a,\allowbreak\rho):= \PP(Y_1(c)\leq y_1, Y_2(c)\leq y_2)$, where $\mathbf{y}\!:=\!(y_1,y_2)\!\in\!\mathbb{P}_1\!\times\!\mathbb{P}_2$, $Y\sim I\bfm+WVAG ^2(a,\bfalpha,\bfmu,\Sigma)$, and $q_{\mathbf{y}}$ is the corresponding empirical probability. Since $p_{\mathbf{y}}(a,\rho)$ is computationally expensive to calculate directly, it is estimated by the empirical probability over 10000 simulations. The joint parameters $a$, $\rho$ are estimated by minimizing the LOESS smooth \cite{CGS} of the error $\sum_{\mathbf{y}\in\mathbb{P}_1\times\mathbb{P}_2} (p_{\mathbf{y}}(a,\rho)-q_{\mathbf{y}})^2$. The predictor variables for the LOESS smooth are 100 equally spaced points on the feasible set of $(a,\rho)\!\in\!(0,(1/\alpha_1)\!\wedge\!(1/\alpha_2))\!\times\!(-1,1)$. For the $VAG $ model, we apply the above method with the constraint $\rho=0$.

\subsection{Goodness of fit statistics}

To assess the overall goodness of fit of each parameter estimation method, as opposed to assessing individual parameters, we consider 3 goodness of fit statistics, the negative log-likelihood ($-\log L$), a chi-squared ($\chi^2$) statistic, and a Kolmogorov-Smirnov (KS) statistic.

To compute $\chi^2$, we apply the Rosenblatt transform \cite{R52} of the fitted distribution to the observations, which has a uniform distribution on $[0,1]^2$ if the fitted distribution coincides with the true distribution, and then we compute the $\chi^2$ statistic for a test of uniformity over an equally spaced partition of $[0,1]^2$ into 100 cells. Since computing $-\log L$ and $\chi^2$ requires Fourier inversion, it may not be possible to compute these statistics accurately when the Fourier invertibility condition does not hold, so they are not displayed in Table~\ref{v-alpha-g-table2}.

Therefore, we also consider the 2-dimensional, two-sample Kolmogorov-Smirnov statistic introduce by Peacock in \cite{peac}, and computed using the method of \cite{xiao}. This is the statistic for testing equality of the fitted distribution and the true distribution based on a sample from the respective distributions, and therefore does not require the density function $f_{Y(c)}(\bfy)$ or Fourier inversion. When applied to real data in Subsection~\ref{realdatasec}, we take the average of the KS statistics computed from the observations and 100 samples from the fitted distribution. When applied to simulated data in Subsection~\ref{simsec}, the KS statistic is computed from the observations and a sample from the fitted distribution. All 3 goodness of fit statistics were averaged over the 100 simulations.

\subsection{Quantile choice for DME}

Different choices of quantiles for DME are possible. Let $\mathbf{q}_1$ be the vector of 10 equally spaced points from 0.05 to 0.95, $\mathbf{q}_2$ be the vector of 10 equally spaced points from 0.01 to 0.99, $\mathbf{q}_3$ be the vector of 10 equally spaced points from 0.1 to 0.9, $\mathbf{q}_4$ be the vector of 20 equally spaced points from 0.05 to 0.95. For sampling interval $c=1$, Table~\ref{qunatilechoice} shows the goodness of fit for 4 choices of quantiles. We find $\mathbf{q}=\mathbf{q}_1$ as yielding the lowest RMSE for most variables and the lowest goodness of fit statistics. However, given that the results are so similar, these quantile choices make only a small difference to the overall goodness of fit.
\begin{table} 
	\begin{center}
		\begin{tabular}{cccccc}
			\hline
			{\bf Parameter} & {\bf True value} & {\bf $\mathbf{q}_1$} & {\bf $\mathbf{q}_2$} & {\bf $\mathbf{q}_3$} & {\bf $\mathbf{q}_4$} \\ \hline
			$a$                 & $\phantom{-}1$    & $0.171$    & $0.171$    & $0.182$    & $0.175$ \\
			$\alpha_1$          & $\phantom{-}0.8$  & $0.127$    & $0.132$    & $0.143$    & $0.128$ \\
			$\alpha_2$          & $\phantom{-}0.6$  & $0.126$    & $0.145$    & $0.149$    & $0.129$ \\
			$\mu_1$             & $\phantom{-}0.1$  & $0.062$    & $0.066$    & $0.066$    & $0.062$ \\
			$\mu_2$             & $-0.3$            & $0.121$    & $0.271$    & $0.229$    & $0.188$ \\
			$\Sigma_{11}$       & $\phantom{-}1$    & $0.084$    & $0.083$    & $0.093$    & $0.084$ \\
			$\Sigma_{22}$       & $\phantom{-}1.2$  & $0.113$    & $0.166$    & $0.147$    & $0.123$ \\
			$\Sigma_{12}$       & $\phantom{-}0.6$  & $0.154$    & $0.182$    & $0.172$    & $0.150$ \\
			$m_1$               & $-0.1$            & $0.051$    & $0.054$    & $0.053$    & $0.050$ \\
			$m_2$               & $\phantom{-}0.3$  & $0.110$    & $0.262$    & $0.219$    & $0.179$ \\ \hline
			$-\ln L$            &                   & $2791.674$ & $2795.826$ & $2794.374$ & $2792.303$ \\
			$\chi^2$            &                   & $93.848$   & $97.292$   & $96.728$   & $95.078$   \\
			KS                  &                   & $0.054$    & $0.055$    & $0.054$    & $0.054$    \\ \hline
		\end{tabular}
		\caption{RMSE using DME with quantiles $\mathbf{q}_1,\dots,\mathbf{q}_4$ for the $WV AG$ model fitted to simulated data with $c=1$.}
		\label{qunatilechoice}
	\end{center}
\end{table}

\subsection{Simulated data results}\label{simsec}

For the sampling interval $c=1$, the Fourier invertibility condition is satisfied as the LHS of \eqref{cond} is $0.75>1/2$. The calibration results for the $WVAG$ model with $c=1$ is shown in Table \ref{v-alpha-g-table2}. Here, we find that MLE gives the best fit with the lowest $\chi^2$ statistic. The KS statistic for MLE and DME are approximately equal.

\begin{table} 
	\begin{adjustwidth}{-3cm}{-3cm}
	\begin{center}
		\begin{tabular}{ccccc}
			\hline
			{\bf Parameter} & {\bf True value} & {\bf MOM} & {\bf MLE} & {\bf DME} \\ \hline
			$a$                 & $\phantom{-}1$    & $\phantom{-}0.920$ $(0.424)$ & $\phantom{-}0.983$ $(0.242)$ & $\phantom{-}0.902$ $(0.171)$ \\
			$\alpha_1$          & $\phantom{-}0.8$  & $\phantom{-}0.806$ $(0.342)$ & $\phantom{-}0.824$ $(0.111)$ & $\phantom{-}0.818$ $(0.127)$ \\
			$\alpha_2$          & $\phantom{-}0.6$  & $\phantom{-}0.589$ $(0.216)$ & $\phantom{-}0.594$ $(0.094)$ & $\phantom{-}0.589$ $(0.126)$ \\
			$\mu_1$             & $\phantom{-}0.1$  & $\phantom{-}0.103$ $(0.097)$ & $\phantom{-}0.103$ $(0.053)$ & $\phantom{-}0.096$ $(0.062)$ \\
			$\mu_2$             & $-0.3$            & $-0.310$ $(0.131)$           & $-0.301$ $(0.083)$           & $-0.313$ $(0.121)$           \\
			$\Sigma_{11}$       & $\phantom{-}1$    & $\phantom{-}0.989$ $(0.078)$ & $\phantom{-}1.006$ $(0.071)$ & $\phantom{-}0.993$ $(0.084)$ \\
			$\Sigma_{22}$       & $\phantom{-}1.2$  & $\phantom{-}1.177$ $(0.088)$ & $\phantom{-}1.202$ $(0.086)$ & $\phantom{-}1.179$ $(0.113)$ \\
			$\Sigma_{12}$       & $\phantom{-}0.6$  & $\phantom{-}0.835$ $(0.335)$ & $\phantom{-}0.669$ $(0.192)$ & $\phantom{-}0.639$ $(0.154)$ \\
			$m_1$               & $-0.1$            & $-0.103$ $(0.089)$           & $-0.105$ $(0.045)$           & $-0.097$ $(0.051)$           \\
			$m_2$               & $\phantom{-}0.3$  & $\phantom{-}0.313$ $(0.120)$ & $\phantom{-}0.302$ $(0.070)$ & $\phantom{-}0.314$ $(0.110)$ \\ \hline
			$-\ln L$            &                   & $\phantom{-}2802.337$        & $\phantom{-}2787.513$        & $\phantom{-}2791.674$        \\
			$\chi^2$            &                   & $\phantom{-}119.052$         & $\phantom{-}91.268$          & $\phantom{-}93.848$          \\
			KS                  &                   & $\phantom{-}0.068$           & $\phantom{-}0.054$           & $\phantom{-}0.054$           \\ \hline
		\end{tabular}
	\end{center}
	\end{adjustwidth}
	\caption{Expected value of estimates and RMSE (in parentheses) for the $WVAG $ model fitted to simulated data with $c=1$.}
	\label{v-alpha-g-table2}
\end{table}

For the sampling interval $c=0.1$, the Fourier invertibility condition is violated as the LHS of \eqref{cond} is $0.08<1/2$. The corresponding results are shown in Table \ref{v-alpha-g-table4}. Here, we find that DME gives the best fit with the lowest KS statistic, however MLE still produces a good fit and does not break down. This suggests that the condition may not be necessary for the MLE to produce accurate parameter estimates. In both cases, $c=1,0.1$, the RMSE and goodness of fit statistics are highest for MOM.

\begin{table}[!htbp]
	\begin{adjustwidth}{-3cm}{-3cm}
	\begin{center}
		\begin{tabular}{ccccc}
			\hline
			{\bf Parameter} & {\bf True value} & {\bf MOM} & {\bf MLE} & {\bf DME} \\ \hline
			$a$                 & $\phantom{-}1$    & $\phantom{-}1.106$ $(0.507)$ & $\phantom{-}0.990$ $(0.062)$ & $\phantom{-}0.896$ $(0.121)$ \\
			$\alpha_1$          & $\phantom{-}0.8$  & $\phantom{-}0.636$ $(0.247)$ & $\phantom{-}0.782$ $(0.033)$ & $\phantom{-}0.796$ $(0.057)$ \\
			$\alpha_2$          & $\phantom{-}0.6$  & $\phantom{-}0.504$ $(0.198)$ & $\phantom{-}0.602$ $(0.026)$ & $\phantom{-}0.603$ $(0.031)$ \\
			$\mu_1$             & $\phantom{-}0.1$  & $\phantom{-}0.099$ $(0.167)$ & $\phantom{-}0.114$ $(0.099)$ & $\phantom{-}0.104$ $(0.170)$ \\
			$\mu_2$             & $-0.3$            & $-0.347$ $(0.219)$           & $-0.250$ $(0.123)$           & $-0.301$ $(0.146)$           \\
			$\Sigma_{11}$       & $\phantom{-}1$    & $\phantom{-}0.992$ $(0.136)$ & $\phantom{-}1.005$ $(0.133)$ & $\phantom{-}1.013$ $(0.302)$ \\
			$\Sigma_{22}$       & $\phantom{-}1.2$  & $\phantom{-}1.197$ $(0.166)$ & $\phantom{-}1.245$ $(0.161)$ & $\phantom{-}1.234$ $(0.221)$ \\
			$\Sigma_{12}$       & $\phantom{-}0.6$  & $\phantom{-}0.842$ $(0.353)$ & $\phantom{-}0.262$ $(0.364)$ & $\phantom{-}0.564$ $(0.188)$ \\
			$m_1$               & $-0.1$            & $-0.111$ $(0.128)$           & $-0.114$ $(0.015)$           & $-0.100$ $(0.000)$           \\
			$m_2$               & $\phantom{-}0.3$  & $\phantom{-}0.351$ $(0.164)$ & $\phantom{-}0.288$ $(0.014)$ & $\phantom{-}0.300$ $(0.001)$ \\ \hline
			KS                  &                   & $\phantom{-}0.326$           & $\phantom{-}0.222$           & $\phantom{-}0.078$           \\ \hline
		\end{tabular}
	\end{center}
	\end{adjustwidth}
	\caption{Expected value of estimates and RMSE (in parentheses) for the $WVAG $ model fitted to simulated data with $c=0.1$.}
	\label{v-alpha-g-table4}
\end{table}

\subsection{Real data results}\label{realdatasec}
Next, we fit the $WVAG $ and $VAG $ models to the S\&P500 and FTSE100 indices as the bivariate price process \eqref{priceproc} for a 5 year period from 14 February 2011 to 12 February 2016 with daily closing price observations taking $c=1$. The estimated parameters, goodness of fit statistics and standard errors computed using 100 bootstrap samples are listed in Table \ref{v-alpha-g-table}. Contour plots of the fitted distributions and scatter plots of the bivariate log returns are shown in Figure \ref{v-alpha-g-plot}.

\begin{sidewaystable}[!htbp]
	\begin{center}
		\begin{tabular}{ccccccc}
			\hline
			& \multicolumn{2}{c}{\bf MOM} & \multicolumn{2}{c}{\bf MLE} & \multicolumn{2}{c}{\bf DME}\\ \hline
			{\bf Parameter} & {\bf $WVAG $} & {\bf $VAG $} & {\bf $WVAG $} & {\bf $VAG $} & {\bf $WVAG $} & {\bf $VAG $} \\ \hline
			$a$                 & $\phantom{-}0.695$ $(0.199)$ & $\phantom{-}0.696$ $(0.217)$ & $\phantom{-}0.962$ $(0.139)$ & $\phantom{-}1.087$ $(0.117)$ & $\phantom{-}0.899$ $(0.138)$ & $\phantom{-}1.114$ $(0.327)$ \\
			$\alpha_1$          & $\phantom{-}1.436$ $(0.398)$ & $\phantom{-}1.436$ $(0.349)$ & $\phantom{-}0.919$ $(0.105)$ & $\phantom{-}0.908$ $(0.104)$ & \multicolumn{2}{c}{$\phantom{-}0.898$ $(0.143)$} \\
			$\alpha_2$          & $\phantom{-}0.933$ $(0.277)$ & $\phantom{-}0.707$ $(0.120)$ & $\phantom{-}0.837$ $(0.105)$ & $\phantom{-}0.919$ $(0.085)$ & \multicolumn{2}{c}{$\phantom{-}0.878$ $(0.132)$} \\
			$1000\mu_1$         & $-1.153$ $(0.704)$           & $-1.156$ $(0.595)$           & $-0.403$ $(0.550)$           & $-0.630$ $(0.694)$           & \multicolumn{2}{c}{$-0.562$ $(0.563)$} \\
			$1000\mu_2$         & $-1.246$ $(0.704)$           & $-1.255$ $(0.789)$           & $-0.891$ $(0.586)$           & $-0.867$ $(0.578)$           & \multicolumn{2}{c}{$-1.166$ $(0.554)$} \\
			$10000\Sigma_{11}$  & $\phantom{-}0.982$ $(0.070)$ & $\phantom{-}0.980$ $(0.069)$ & $\phantom{-}0.976$ $(0.061)$ & $\phantom{-}0.935$ $(0.064)$ & \multicolumn{2}{c}{$\phantom{-}0.928$ $(0.071)$} \\
			$10000\Sigma_{22}$  & $\phantom{-}1.006$ $(0.053)$ & $\phantom{-}1.007$ $(0.062)$ & $\phantom{-}1.028$ $(0.063)$ & $\phantom{-}1.014$ $(0.060)$ & \multicolumn{2}{c}{$\phantom{-}1.051$ $(0.086)$} \\
			$10000\Sigma_{12}$  & $\phantom{-}0.994$ $(0.072)$ & $\phantom{-}-$               & $\phantom{-}0.813$ $(0.089)$ & $\phantom{-}-$               & $\phantom{-}0.844$ $(0.095)$ & $\phantom{-}-$ \\
			$1000m_1$           & $\phantom{-}1.422$ $(0.604)$ & $\phantom{-}1.426$ $(0.519)$ & $\phantom{-}0.705$ $(0.430)$ & $\phantom{-}0.879$ $(0.575)$ & \multicolumn{2}{c}{$\phantom{-}0.982$ $(0.444)$} \\
			$1000m_2$           & $\phantom{-}1.198$ $(0.617)$ & $\phantom{-}1.207$ $(0.722)$ & $\phantom{-}0.847$ $(0.440)$ & $\phantom{-}0.882$ $(0.470)$ & \multicolumn{2}{c}{$\phantom{-}1.066$ $(0.410)$} \\ \hline
			$-\ln L$            & $-8465.969$                  & $-8192.168$                  & $-8496.315$                  & $-8239.301$                  & $-8492.798$                  & $-8237.767$ \\
			$\chi^2$            & $\phantom{-}144.034$         & $\phantom{-}702.241$         & $\phantom{-}118.574$         & $\phantom{-}586.789$         & $\phantom{-}99.359$          & $\phantom{-}584.547$ \\
			KS                  & $\phantom{-}0.073$           & $\phantom{-}0.151$           & $\phantom{-}0.050$           & $\phantom{-}0.139$           & $\phantom{-}0.048$           & $\phantom{-}0.138$ \\
			\hline
		\end{tabular}
		\caption{Parameter estimates and standard errors (in parentheses) when fitted to the S\&P500-FTSE100 data set. Estimation for the marginal parameters of $WVAG $ and $VAG $ models are identical.}
		\label{v-alpha-g-table}
	\end{center}
\end{sidewaystable}

\begin{figure}[!htbp]
	\begin{center}
		\includegraphics{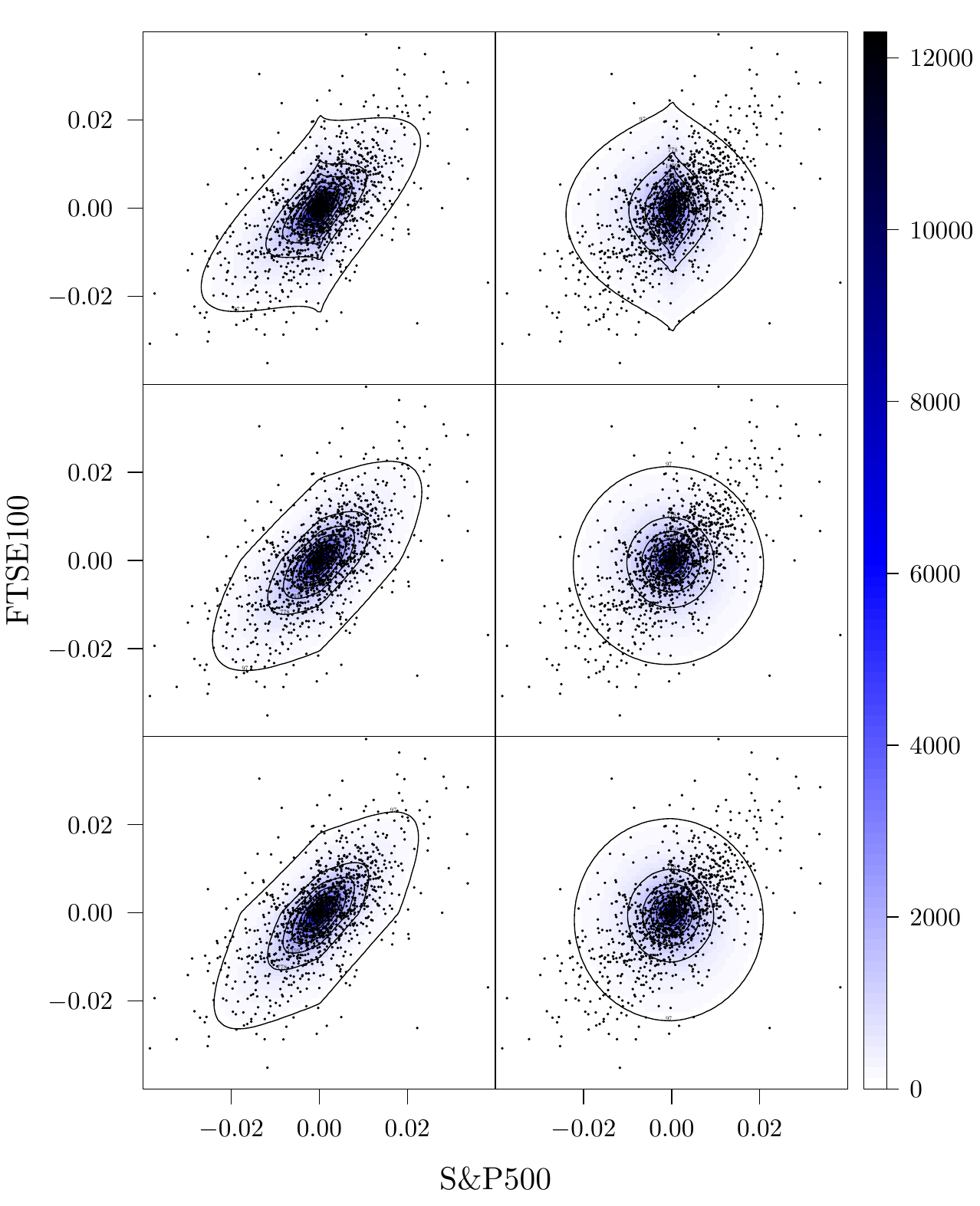}
		\caption{Scatterplots of log returns of the S\&P500-FTSE100 data set and contour plots of the fitted distributions using the weak $VAG $ model (left) and $VAG $ model (right) with MOM (top), MLE (middle), DME (bottom).}
		\label{v-alpha-g-plot}
	\end{center}
\end{figure}

Note that the Fourier invertibility condition is satisfied for all fitted models. Based on the $\chi^2$, KS statistic and contour plots, the $WVAG $ model produces a better fit than the $VAG $ model. In addition, for the $WVAG $ model, DME gives a fit with a lower $\chi^2$ and KS statistic than MLE and MOM.

Assuming that the log returns satisfies the $WVAG $ model, a likelihood ratio test can be used to test the hypothesis $H_0:\Sigma_{12}=0$ versus $H_1:\Sigma_{12}\neq0$.
The test statistic $D=514.03$ is asymptotically $\chi^2$ distributed with 1 degree of freedom. The $p$-value is $<10^{-4}$, so the $VAG $ model is rejected.
Indeed, the $VAG $ model is not suited for modelling strong correlation since $\operatorname{Cov}((B\circ T)_1(1),\allowbreak (B\circ T)_2(1))=a\alpha_1\alpha_2\mu_1\mu_2$ by Proposition \ref{wvagprop} (vi), which is approximately 0 when $\mu_1\mu_2$ is.

It has been suggested that log-returns should be self-decomposable \cite{Bi06,BiKi02,CGMY07}. Note that $\bfmu=(-0.0004,-0.0008)$ is very close to ${\bf 0}$, which suggests that the log-returns process $Y$ is indeed self-decomposable (see Proposition \ref{wvagprop} (vii)). A likelihood ratio test can be used to test this hypothesis, $H_0:\bfmu={\bf 0}$ versus $H_1:\bfmu\neq{\bf 0}$. The test statistic $D=4.11$ is asymptotically $\chi^2$ distributed with 2 degrees of freedom. The $p$-value is 0.128, so at a 5\% significance level we cannot reject that $Y$ is self-decomposable.

\section{Conclusion}

The $WVAG $ process constructed by using weak subordination generalises the $VAG $ process, and we obtain a condition for Fourier invertibility in Theorem \ref{fourierinvert}. We have shown that MOM, MLE and DME can be used to estimate the parameters of a $WVAG $ process, and find that in our simulations MLE produces a better fit when the Fourier invertibility condition holds, while DME produces a better fit when it is violated. However, MLE may still produce good parameter estimates even when the Fourier invertibility condition is violated. In all cases, MOM produces the worst fit. We find that the $WVAG $ process exhibits a wider range of dependence and produces a significantly better fit than the $VAG $ process when used to model the S\&P500-FTSE100 data set, and that DME produces the best fit in this situation.

\appendix
\section{Appendix}
\noindent{\bf L\'evy process.}~The reader is referred to the monographs~\cite{Ap09,b,s} for necessary material on L\'e\-vy pro\-cesses, to \cite{BB14,CS09,CT} for financial applications, while
our notation follows~\cite{BKMS16,BLMa}. For $A\subseteq\RR^n$, let $A_*:=A\backslash\{{\bf 0}\}$ and let $\eins_A(\omega)$ denote the indicator function. Let $\DD:=\{\bfx\in\RR^n:\|\bfx\|\le 1\}$ be the Euclidean unit ball centred at the origin. The law of an $n$-dimensional L\'evy process $X=(X_1,\dots,X_n)=(X(t))_{t\ge 0}$ is determined by its characteristic function $\Phi_X:=\Phi_{X(1)}$, with
\[\Phi_{X(t)}(\bftheta)\,:=\,\EE\exp(\rmi\skal\bftheta{X(t)})\,=\,\exp(t\allowbreak\Psi_X(\bftheta))\,, \qquad t\!\ge\!0,\]
and L\'evy exponent $\Psi_X:=\Psi$, where
\begin{equation*}\label{0.1}
\Psi(\bftheta)\,:=\,
\rmi \skal {\bfmu}\bftheta\!-\!\frac 12\;\|\bftheta\|^2_{\Sigma}
+\int_{\RR^n_*}\left(e^{\rmi\skal\bftheta \bfx}\!-\!1\!-\!\rmi\skal\bftheta \bfx \eins_\DD(\bfx)\right)\,\XXX(\rmd \bfx)\,,
\end{equation*}
$\bftheta\in\RR^n$, $\bfmu\in\RR^n$, $\Sigma\in\RR^{n\times n}$ is a covariance matrix,
and $\XXX$ is a nonnegative Borel measure on $\RR^n_*$ such that
$\int_{\RR^n_*}(\|\bfx\|^2\wedge 1)\,\XXX(\rmd \bfx)<\infty$. We write $X\sim L^n(\bfmu,\Sigma,\XXX)$ provided $X$ is an $n$-dimensional L\'evy process with canonical triplet $(\bfmu,\Sigma,\XXX)$.

A subordinator $T\!\sim\!S^n(\TTT)\!=\!L^n(\bfmu,0,\TTT)$ is drift-less if its drift $\bfmu\allowbreak\!-\!\int_{\DD_*} \bft\,\TTT(\rmd \bft)={\bf 0}$. All subordinators considered in this paper are drift-less.

An $n$-dimensional random variable $X$ is {\em self-decomposable} if for any $0\!<\!b\!<\!1$, there exists a random variable $Z_b$, independent of $X$, such that $X\eqd bX+Z_b$. A L\'evy process $X$ is self-decomposable if $X(1)$ is.\\[1mm]
\noindent{\bf Strongly subordinated Brownian motion.} Let $B=(B_1,\dots,B_n)\sim BM^n(\bfmu,\Sigma)$ be a Brownian motion and $T\!=\!(T_1,\dots,T_n)\!\sim\!S^n(\TTT)$ be a drift-less subordinator. A process $B\circ T$ is the traditional or {\em strong subordination of $X$ and $T$} if $(B\circ T)(t):=(B_1(T_1(t)),\dots,B_n(T_n(t)))$, $t\ge 0$.\\[1mm]
\noindent{\bf Weakly subordinated Brownian motion.}~Let $\bft\!=\!(t_1,\dots t_n)\!\in\![0,\infty)^n$, $\bfmu\!=\!(\mu_1,\dots \mu_n)\!\in\!\RR^n$ and $\Sigma\!=\!(\Sigma_{kl})\!\in\!\RR^{n\times n}$ be a covariance matrix. Introduce the outer products~$\bft\tr\bfmu\in\RR^n$ and~$\bft\tr\Sigma\!\in\!\RR^{n\times n}$ by
\begin{equation}\label{outproddefn}
\bft\tr\bfmu:=(t_1\mu_1,\dots,t_n\mu_n)\quad\text{and}\quad(\bft\tr \Sigma)_{kl}:=\Sigma_{kl}(t_k\wedge t_l),\quad 1\le k,l\le\!n.
\end{equation}
Let $B\!\sim\!BM^n(\bfmu,\Sigma)$ be an $n$-dimensional Brownian motion and $T\!\sim\!S^n(\TTT)$ be an $n$-dimensional drift-less subordinator. A L\'evy process $B\odot T$ is called the
{\em weak subordination of $B$ and $T$}~(see \cite{BLMa}, their Proposition~3.1) if it has L\'evy exponent
\begin{align}\label{wsexp}
\Psi_{B\odot T}(\bftheta)=\int_{[0,\infty)^n_*}\left(\exp\left( \rmi\skal\bftheta{\bft\tr\bfmu}-\frac 12\|\bftheta\|^2_{\bft\tr\Sigma}\right)-1\right)\,\TTT(\rmd\bft)\,,
\end{align}
$\bftheta\in\RR^n$. Note that a more general definition of weak subordination and a proof of existence is given in \cite{BLMa}.

Assume that independent $B$ and $T$. If $T$ has indistinguishable components or $B$ has independent components, then 
\begin{align}\label{strongequalsweak}
B\circ T\eqd B\odot T
\end{align}
Otherwise $B\circ T$ may not a L\'evy process, but $B\odot T$ always is (see~\cite{BLMa}, their Proposition~3.3 and 3.9).

\section*{Acknowledgement}
B. Buchmann's research was supported by ARC grant DP160104737. K. Lu's research was supported by an Australian Government Research Training Program Scholarship.

\end{document}